%
% Last version with detailed treatment of uniform case
%
%\documentclass{sig-alternate-10pt}
\documentclass[11pt]{article}

\usepackage{graphicx}
\usepackage{amsfonts}
\usepackage{amsmath}
\usepackage{amssymb}
\usepackage{amsthm}
\usepackage{ifthen}
\usepackage{epsfig}
\usepackage{xspace}
\usepackage{fullpage}

\newtheorem{thm}{Theorem}[section]

\newtheorem{clm}[thm]{Claim}
\newtheorem{lem}[thm]{Lemma}

\newtheorem{cor}[thm]{Corollary}

\def\E{\mathbb{E}}
\def\PRB{\mathbb{P}}
\newcommand\card[1]{\lvert #1 \rvert}
\newcommand{\cH}{{\cal H}\xspace}
\newcommand{\cE}{{\cal E}\xspace}

\begin{document}
%\begin{titlepage}

\title{Random Preferential Attachment Hypergraphs\footnote{Supported in part by the Israel Science Foundation (grant 1549/13).}}
\author{
Chen Avin {\footnotemark[2] \footnotemark[4]}
\and
Zvi Lotker \footnotemark[2]
\and
David Peleg \footnotemark[3]
}

\def\thefootnote{\fnsymbol{footnote}}
\footnotetext[2]{
\noindent 
Department of Communication Systems Engineering, Ben Gurion University of the Negev, Beer-Sheva, Israel. E-mail:~{\tt \{avin,zvilo\}@cse.bgu.ac.il}.}
\footnotetext[4]{
\noindent 
Part of this work was done while the author was a long term visitor at ICERM, Brown University}
\footnotetext[3]{
\noindent
Department of Computer Science and Applied Mathematics, The Weizmann Institute of Science, Rehovot, Israel. E-mail:~{\tt david.peleg@weizmann.ac.il}.
Supported in part by the I-CORE program of the Israel PBC and ISF (grant 4/11).}

\date{\today}

%\thanks{Part of this work was done while the author was a long term visitor at ICERM, Brown University}

\maketitle 
\thispagestyle{empty}

\begin{abstract}
The random graph model has recently been extended to a random 
preferential attachment graph model, in order to enable the study of
general asymptotic properties in network types that are better represented
by the preferential attachment evolution model than by the ordinary (uniform)
evolution lodel.
Analogously, this paper extends the random {\em hypergraph} model 
to a random {\em preferential attachment hypergraph} model.
We then analyze the degree distribution of random preferential attachment 
hypergraphs and show that they possess heavy tail degree distribution 
properties similar to those of random preferential attachment graphs. 
However, our results show that the exponent of the degree distribution is 
sensitive to whether one considers the structure as a hypergraph or as a graph.
\end{abstract}

%\end{titlepage}

\textbf{Keywords}: Random Hypergraphs, Preferential attachment, Social Networks, Degree Distribution.

\newpage

%%%%%%%%%%%%%%%%%%%%%%%%%%%%%%%%%%%
\section{Introduction}
%%%%%%%%%%%%%%%%%%%%%%%%%%%%%%%%%%%

% Random graphs are usfule 
% preferntail attachment as a "`leading"' model for social and complex system
% Real life systems are in many cases hyper graphs
% We ofer a simple extension to fung chung preferential attachment model.
% Explian her model
% Show her results
% give our results
% Discussion of our results results

% Short related work newman 
% add a corolory on the degree distribution of the graph. G(H). 

Random structures have proved to be an extremely useful concept in many 
disciplines, including mathematics, physics, economics and communication 
systems. Examining the typical behavior of random instances of a structure
allows us to understand its fundamental properties.
% and ignore the outliers. 
%The importance of understanding random structures and 
The foundations of random graph theory were first laid in a seminal paper 
by Erd{\H{o}}s and R{\'e}nyi in the late 1950's \cite{erd6s1960evolution}. 
Subsequently,
%After Erd{\H{o}}s and R{\'e}nyi presented their random graph model, 
several alternative models for random  structures, often suitable for other 
applications, were suggested. 
One of the most important alternative models is the 
\emph{preferential attachment} model \cite{barabasi1999emergence}, 
which was found particularly suitable for describing a variety of phenomena 
in nature, such as the ``rich get richer'' phenomena, which cannot be 
adequately simulated by the original Erd{\H{o}}s-R{\'e}nyi model.
It has been shown that the preferential attachment model captures some 
universal properties of real world social networks and complex systems, 
like heavy tail degree distribution and the ``small world'' phenomenon 
\cite{newman2010networks}. 

%Recently, Linial {\bf ADD REF'S - I cannot find any in his home page or dblp, perhaps (*) below}, {\bf I think David is right zvika}  \cite{} in a series of ground-breaking papers, extended the Erdos-Renyi random graph model to random \emph{hypergraphs}, where a hyperedge is a subset of vertices, rather than only a pair of vertices as in a graph. See for example \cite{}. In this paper, following an analogous target,
%Linial's footsteps

One limitation of graphs is that they only capture dyadic (or binary) 
%pairwise 
relations.
In real life, however, many natural, physical and social phenomena involve
%multi-party 
$k$-ry relations for $k>2$, and therefore can be more accurately represented 
by hypergraphs than by graphs. For example, collaborations among researchers, 
as manifested through joint coauthorships of scientific papers, 
may be better represented by hyperedges and not edges. 
Figure \ref{fig:example}(a) depicts the hypergraph representation 
for coauthorship relations on four papers:
paper 1 authored by $\{a,b,e,f\}$,
paper 2 authored by $\{a,c,d,g\}$, 
paper 3 authored by $\{b,c,d\}$ and 
paper 4 authored by $\{e,f\}$.
%%\ \
%%\ \hspace {0.6cm} \
%\begin{minipage}{0.3\textwidth}
%%\parbox{2in}
%{\includegraphics[scale=0.25]{figures/figh.pdf}}
%%\parbox{2in}
%{\includegraphics[scale=0.35]{figures/figg.pdf}}
%%\ ~~ \                              % generate space between the two boxes
%%\parbox{1.2in}{\includegraphics[scale=0.9]{images/tree_3_3_8.pdf}}
%\end{minipage}
%%\bigskip\noindent
Likewise, wireless communication networks \cite{avin2014radio} 
or social relations captured by photos that appear in Facebook and other 
social media also form hyperedges \cite{zhang2010hypergraph}. 
Affiliation models \cite{lattanzi2009affiliation,newman2002random}, 
which are a popular model for social networks, are commonly interpreted 
as bipartite graphs, where in fact they may sometimes be represented 
more conveniently as hypergraphs. 
Figure \ref{fig:example}(b) presents the bipartite graph representation 
of the hypergraph $H$ of Figure \ref{fig:example}(a).
Sometimes, one can only access the \emph{observed graph} $G(H)$ of the original 
hypergraph $H$, that is, only the pairwise relation between players is available
(see  Figure \ref{fig:example}(c)). In some cases this structure may be 
sufficient for the application at hand, but in many other cases the hypergraph 
structure is more accurate and informative/
% and can provide more detail.

%{\bf DP: The figure looks poor in B/W - perhaps we should generate a B/W one.}

The study of hypergraphs, and in particular random hypergraph models, has its 
roots in a 1976 paper by Erd{\H{o}}s and Bollobas \cite{bollobas1976cliques}, 
which offers a model analogous to the Erd{\H{o}}s-R{\'e}nyi random graph model 
\cite{erd6s1960evolution}. Recently, several interesting properties regarding 
the evolution of random hypergraphs in this model were studied in 
\cite{cooper1996perfect,ellis2013regular,ghoshal2009random}.

The current paper is motivated by the observation that, just as in the 
random graph case, the random hypergraph model is not suitable for studying 
social networks. Our first contribution is in extending the concept of random 
preferential attachment graphs to 
{\em random preferential attachment hypergraphs}.
We believe the this natural model will turn out to be useful 
in the future study of social networks and other complex systems.

The main technical contribution is that we analyze the degree distribution of random preferential attachment hypergraphs and show that they possess heavy tail degree distribution properties, similar to those of random preferential attachment graphs. However, our results show that the exponent of the degree distribution is sensitive to whether one considers the structure as a hypergraph or as a graph.

As a reference point, we consider the random preferential attachment graph 
model of Chung and Lu \cite{chung2006complex}. 
In that model, starting from an initial graph $G_0$, 
at any time step there occurs an event of one of two possible types: 
(1) a {\em vertex-arrival event}, occuring with probability $p$, 
where a new \emph{vertex} joins the network and selects its neighbor among 
the existing vertices via preferential attachment, or
(2) an {\em edge-arrival event}, occuring with probability $1-p$,
where a new \emph{edge} joins the network and selects its two endpoints 
from among the existing vertices via preferential attachment.
It is shown in \cite{chung2006complex} that the degree distribution of the
random preferential attachment graph follows a power law,
i.e., the probability of a random vertex to be of degree $k$ is proportional 
to $k^{-\beta}$, with $\beta^G=2+\frac{p}{2-p}$. 
A similar result can be shown in a setting where, at each time step, 
$d$ edges join the graph instead of only one 
(in either a vertex event or an edge event)\cite{newman2010networks}. 
This result holds even if at each step a random number of edges join the 
network, so long as the expected number of new edges is $d$ 
and the variance is bounded.
% *** Chen comments fixed - last two sentences ***

The model proposed here extends Chung and Lu's \cite{chung2006complex} model 
to support hypergrpahs. 
That is, the process starts with an initial hypergrpah, and at each time step
a random hyperedge joins the network. With probabilty $p$ this new random 
hyperedge includes a new vertex, and with probabilty $1-p$ it does not. 
%In both cases, members of the hyperedge are selected according to preferential attachment.
Our model allows the hyperedge sizes to be random (with some restrictions) 
and the members of each edge are selected randomly 
according to preferential attachment.

We show that the degree distribution of the resulting hypergraph 
(as well as the observed graph) follows a power law, but with an exponent 
$\beta^H=2+\frac{p}{\mu-p}$, where $\mu$ is the expected size of an hyperedge.

Our results indicate that one should be careful when studying an observed 
graph of a general $k$-ry relation. In particular, it makes a difference 
if the observed graph was generated by a graph or by a hypergraph 
evolution mechanism, since the two generate observed graphs 
with \emph{different} degree distributions.

\begin{figure}[t]
\centering
\begin{tabular}{ccc}
\includegraphics[width=35mm]{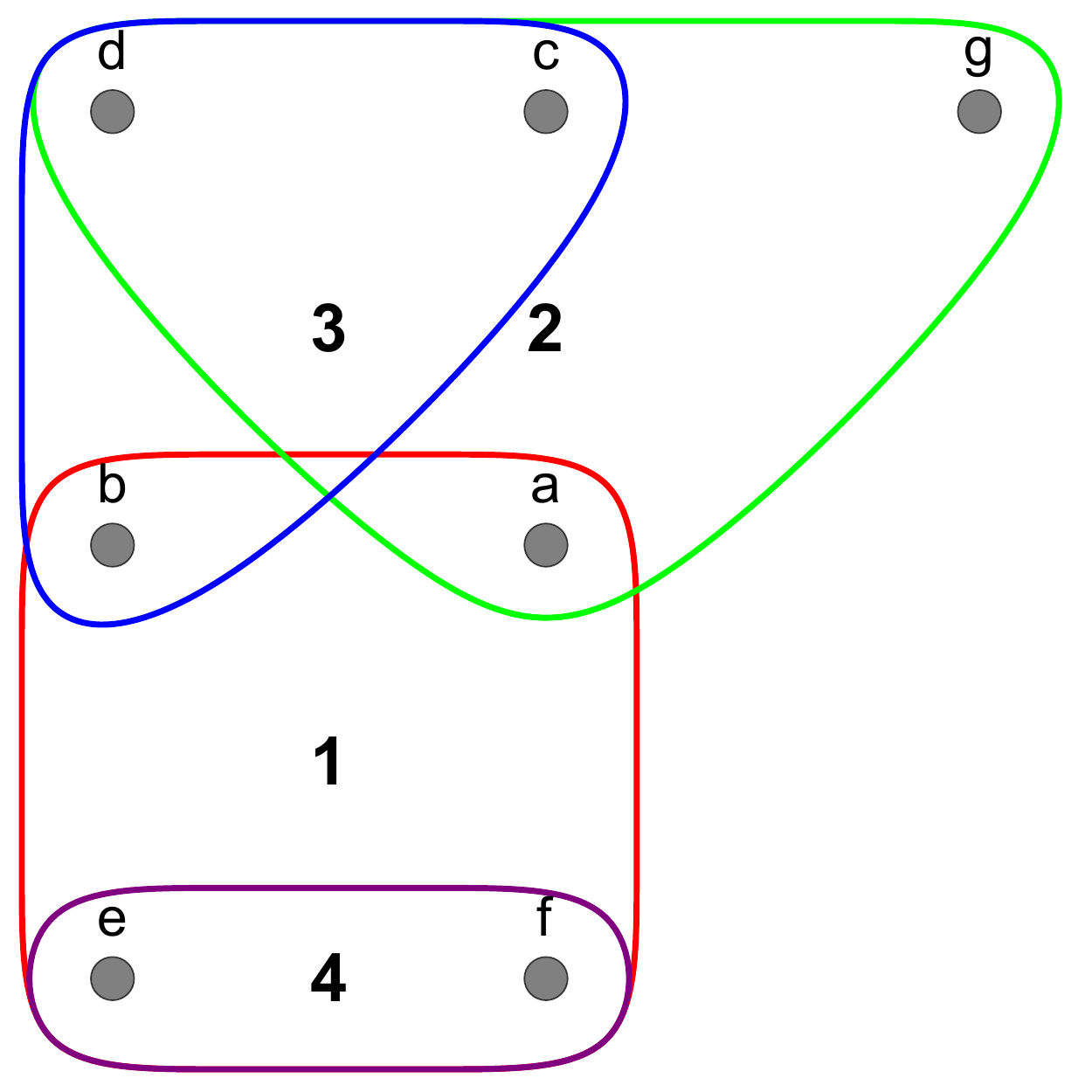}&
\includegraphics[width=40mm]{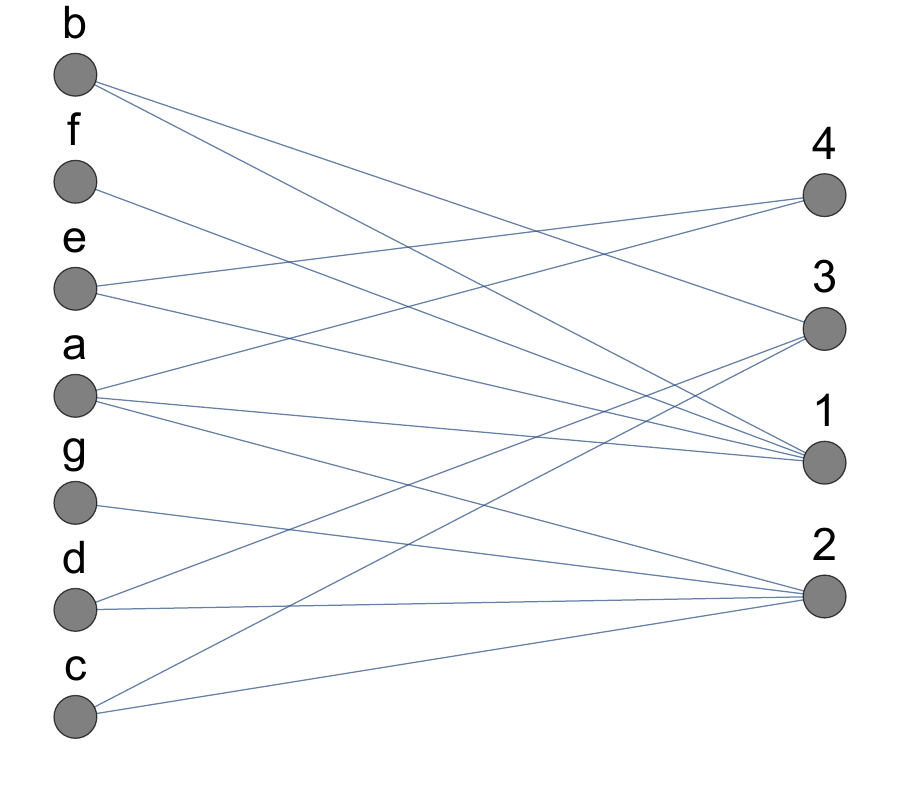}&
\includegraphics[width=50mm]{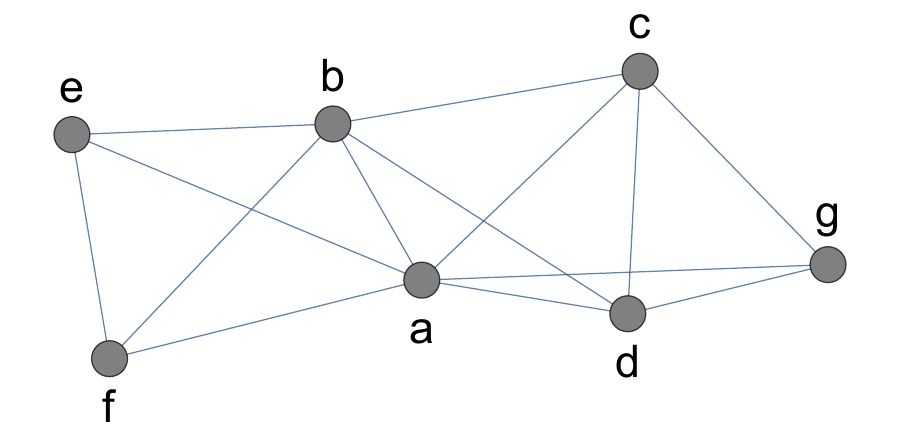}\\
(a) & (b) & (c)
\end{tabular}
\caption{
(a) A hypergraph $H$ with 7 vertices and 4 edges.
(b) A bipartite graph representation of $H$. 
(c) The observed graph $G(H)$.}
\label{fig:example}
\end{figure}

In the next sections we describe in more detail the preferential attachment 
model of a hypergraph, and then analyze the resulting degree distribution.

%%%%%%%%%%%%%%%%%%%%%%%%%%%%%%%%%%%
\section{Preliminaries}
%%%%%%%%%%%%%%%%%%%%%%%%%%%%%%%%%%%

Given a set $V$ and a natural $k>1$, let $V^{(k)}$ be the set of all 
unordered vectors (or multisets) of $k$ elements from $V$. 
A finite undirected \emph{graph} $G$ is an ordered pair $(V, E)$ where $V$ is 
a set of $n$ \emph{vertices} and $E \subseteq V^{(2)}$ is the set
of graph \emph{edges} (unordered pairs from $V$, including self loops).
%$V$ and $E$ are assumed to be finite unless explicitly stated otherwise.
%If $G$ is a graph, then $V = V(G)$ is the vertex set of $G$, and $E = E(G)$ 
%is the edge set. If $v$ is a vertex of $G$, we sometimes write $v \in G$ 
%instead of $v \in V(G)$.

%\subsubsection{Hypergraphs}

%For a set $x$ let $\card{x}$ denote the cardinality of the set. We follow with a formal definition for an undirected hypergraph.
%\begin{dfn}[Hypergraph]

A hypergraph $\mathcal{H}$ is an ordered pair $(V, \mathcal{E})$, where $V$ 
is a set of $n$ vertices and 
$\mathcal{E}  \subseteq \bigcup_{i=2}^n V^{(i)}$ is a set of 
{\em hyperedges} connecting the vertices
%\\ {\bf DP: I do not understand what this footnote says, and why we say it. I suggest omitting it.}\footnote{Usually we consider edges $e \in  \mathcal{E}$ where $\card{e} \ge 2$} \\
(including self loops). 
%$\mathcal{H}$ is said to be $k$-\emph{uniform} if every edge $e \in \mathcal{E}$ has cardinality $k$, We use $|e|$ to denote the cardinality of the set %$e$. \textbf{Chen Problem}.
The \emph{rank} $r(\cH)$ of a hypergraph $\cH$ is the maximum cardinality 
of any of the hyperedges in the hypergraph. When all hyperedges have the same 
cardinality $k$, the hypergraph is said to be \emph{$k$-uniform}. 
A graph is thus simply a 2-uniform hypergraph.
The \emph{degree} of a hyperedge $e \in \cE$ is defined to be 
$\delta(e) = \card{e}$.
The set of all hyperedges that contain the vertex $v$
is denoted $\mathcal{E}(v)=\{e\in \mathcal{E} \mid v \in e\}$. 
The \emph{degree} $d(v)$ of a vertex $v$ is the number of hyperedges 
in $\cE(v)$, i.e., $d(v) = \card{\cE(v)}$.
$\mathcal{H}$ is $d$-\emph{regular} if every vertex has degree $d$.

In the classical preferential attachment graph model 
\cite{barabasi1999emergence}, the evolution process starts with an arbitrary 
finite initial network $G_0$, which is usually set to a single vertex 
with a self loop. Then this initial network evolves in time, with $G_t$ 
denoting the network after time step $t$. 
In every time step $t$ a new vertex $v$ enters the network. 
On arrival, the vertex $v$ attaches itself to an existing vertex $u$ chosen
at random with probability proportional to $u$'s degree at time $t$, i.e.,
$$\PRB\left[u \text{ is chosen}\right] ~=~ 
\frac{d_{t}(u)}{\sum_{w\in G_t}d_{t}(w)}~,$$
where $d_t(x)$ is the degree of vertex $x$ at time $t$.

%%%%%%%%%%%%%%%%%%%%%%%%%%%%%%%%%%%
\section{The nonuniform preferential attachment hypergraph model}
% $H(H_0, p, Y)$}
%%%%%%%%%%%%%%%%%%%%%%%%%%%%%%%%%%%

Similar to the classical preferential attachment graph model \cite{chung2006complex}, the evolution of the hypergraph occurs along a discrete time axis, with one event occurring at each time step.
We consider two types of possible events on the hypergraph at time $t$:
(1) a \emph{vertex arrival} event, which involves adding a new vertex 
along with a new hyperedge, and a \emph{hyperedge arrival} event, 
where a new hyperedge is added.

We consider a nonuniform, random hypergraph where self loops
(i.e., multiple appearance of a vertex in a hyperedge) are allowed. 
We consider self loops as contributing 1 to the vertex degree.
%\\{\bf Note: then a hyperedge is not a set?}
Similar to \cite{chung2006complex}, our preferential attachment model, 
$H(H_0, p, Y)$, has three parameters:
\begin{itemize}
\item A probability $0 < p \le 1$ for vertex arrival events.
\item An initial hypergraph $H_0$ given at time 0.
\item A sequence of random independent integer variables 
$Y = (Y_0,Y_1,Y_2, \dots)$, for $Y_i\ge 2$, which determine the cardinality
of the new hyperedge arriving at time $t$.
\end{itemize}

The process by which the random hypergraph $H(H_0, p, Y)$ grows in time
is as follows.

\begin{itemize}
\item We start with the initial hypergraph $H_0$ at time 0.
\item At time $t>0$, the graph $H_t$ is formed from $H_{t-1}$ 
in the following way:
\begin{itemize}
\item
Randomly draw a bit $b$ with probability $p$ for $b=0$.
\item If $b=0$, then add a new vertex $u$ to $V$, select $Y_t-1$ vertices
from $H_{t-1}$ (possibly with repetitions) independently in proportion
to their degrees in $H_{t-1}$, and form a new hyperedge $e$ that includes
$u$ and the $Y_t-1$ selected vertices\footnote{note that as the hypergraph 
gets larger, the probability of adding a self-loop is vanishing.}.
\item Else, select $Y_t$ vertices from $H_{t-1}$ (possibly with repetitions)
independently in proportion to their degrees in $H_{t-1}$, and
form a new hyperedge $e$ that includes the $Y_t$ selected vertices.
\end{itemize}
\end{itemize}

Hereafter, we consider an initial $H_0$ consisting of a single hyperedge
of cardinality $Y_0$ over a single vertex (recall that self-loops 
are considered as contributing 1 to the vertex degree).

%%%%%%%%%%%%%%%%%%%%%%%%%%%%%
\section{Degree Distribution Analysis}
% of a nonuniform random hypergraph $H(H_0,p,Y)$}
%%%%%%%%%%%%%%%%%%%%%%%%%%%%%

To ensure convergence of the degree distribution we first need to set 
some conditions on the distribution of the hyperedge cardinalities. 
These are somewhat mild conditions that seems to agree with real data 
(see Fig. \ref{fig:dblp-edges-power} in Section \ref{sec:dblp}).
%\\ {\bf DP: How do these assumptions match what's happening in DBLP? Perhaps we should erase the last sentence.} \\
% Chen: add clarification next to the figure.
Let $Y_t$ be independent (not necessarily identical) random variables with
constant expectation $\E[Y_t]=\mu$ and bounded support
%\\{\bf DP: Define support.}\\
%DP: Is this related to the (currently undefined) $a_i$ and $b_i$ 
%in the proof of the next lemma? Chen: Yes
s.t. $2 < Y_t < t^{\frac{1}{3}}$ \footnote{The exponent $\frac{1}{3}$ is chosen somewhat arbitrarily; the result can be extended to any constant $0 \le \alpha < \frac{1}{2}$.}.
Under these conditions we can show the following.
%the degree distribution of the hypergraph $H(H_0, p, Y)$ follows a power law.

\begin{thm}
The degree distribution of a hypergraph $H(H_0, p, Y)$ where $\E[Y_t]=\mu$
follows a power law with $\beta=2+p/(\mu-p)$.
\end{thm}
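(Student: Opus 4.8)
The plan is to track the evolution of the expected number of degree-$k$ vertices through a master (rate) equation, following the standard approach for preferential attachment degree distributions but accounting for the random hyperedge sizes. Let $N_t$ denote the number of vertices at time $t$ and let $M_k(t)$ be the expected number of vertices of degree $k$ at time $t$. First I would record the two basic bookkeeping facts: since exactly one hyperedge is added per step and a new vertex arrives with probability $p$, we have $N_t = \Theta(pt)$, and the total degree $D_t := \sum_v d_t(v) = \sum_{e} \delta(e)$ grows in expectation by $\mu$ per step (each new hyperedge contributes its cardinality to the degree sum, and a self-loop still contributes $1$ to the endpoint's degree, so this holds up to the vanishing self-loop correction), giving $D_t = \Theta(\mu t)$. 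These two asymptotics are what make the ratio $N_t/D_t \to p/\mu$, which will be the source of the exponent.

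Next I would set up the recursion for $M_k(t)$. Condition on $Y_t$; a vertex of degree $k$ at time $t-1$ gets ``hit'' by the new hyperedge — and thus promoted to degree $k+1$ — with probability essentially $Y_t \cdot k / D_{t-1}$ for one slot, and since the $Y_t$ (or $Y_t-1$) slots are chosen independently in proportion to degree, the expected number of new hyperedges touching a fixed degree-$k$ vertex is $\approx \mu k / D_{t-1}$ (the events of being hit in two different slots of the same hyperedge are lower order because $Y_t < t^{1/3}$ and $D_{t-1} = \Theta(t)$, so multi-hits contribute $O(Y_t^2 k^2 / D_{t-1}^2) = o(1/t)$). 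This gives, for $k \ge 2$,
\begin{equation}
M_k(t) = M_k(t-1) + \frac{\mu (k-1)}{D_{t-1}} M_{k-1}(t-1) - \frac{\mu k}{D_{t-1}} M_k(t-1) + o(1),
\end{equation}
while for the boundary term $k=1$ the gain comes not from promotion but from the vertex-arrival event: with probability $p$ a brand-new degree-$1$ vertex appears, so $M_1(t) = M_1(t-1) + p - \frac{\mu}{D_{t-1}} M_1(t-1) + o(1)$.

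Then I would look for a stationary solution of the form $M_k(t) = c_k \, t$ (equivalently $M_k(t)/N_t \to c_k/p =: \pi_k$, the limiting degree distribution). Substituting $M_k(t) = c_k t$, using $D_{t-1} \sim \mu t$ so that $\mu k / D_{t-1} \sim k/t$, and matching the $\Theta(1)$ terms turns the recursion into the algebraic system $c_k = (k-1) c_{k-1}/1 \cdot (\text{factor})$ — more precisely $c_k(1 + k \cdot \tfrac{1}{1}\cdot\tfrac{1}{?})$; carefully, dividing through by $t$ and using $c_k t - c_k(t-1) = c_k$, one gets $c_k = \frac{(k-1)}{\,k + \mu/p\,}\, c_{k-1}$ for $k \ge 2$ and $c_1 = \frac{p}{1 + \mu/p} = \frac{p^2}{p+\mu}$ from the boundary equation (here the $\mu/p$ appears because the coefficient $\mu k/D_{t-1} \sim \mu k /(\mu t)= k/t$ competes against the vertex-creation rate $p$; writing $N_t \sim pt$ and normalizing by $N_t$ makes the $\mu/p$ explicit). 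Unwinding this product, $\pi_k = c_k/p \propto \prod_{j=2}^{k} \frac{j-1}{j + \mu/p} = \frac{\Gamma(k)\,\Gamma(2+\mu/p)}{\Gamma(1)\,\Gamma(k+1+\mu/p)} \sim C\, k^{-(1 + \mu/p)} \cdot k^{?}$; by the standard $\Gamma(k+a)/\Gamma(k+b) \sim k^{a-b}$ asymptotics this is $\Theta\!\left(k^{-(1 + \mu/p)}\right)$. To match the claimed $\beta = 2 + p/(\mu - p)$ I would double-check the boundary constant and the exact coefficient of the loss term (whether it is $\mu/p$ or $\mu/p - 1$); writing $1 + \mu/p$ in the form with the stated $\mu - p$ in the denominator is exactly the bookkeeping that pins $\beta$ down, and getting that constant exactly right is the one genuinely delicate computation.

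Finally, the part requiring real care — and the main obstacle — is turning the heuristic ``$M_k(t) \sim c_k t$'' into a rigorous statement with controlled error, i.e., proving concentration. I would do this by (i) showing $D_t$ and $N_t$ are concentrated around their means (Azuma/Bernstein on the Doob martingale, using that each step changes $D_t$ by at most $\max Y_t < t^{1/3}$, which is exactly where the bounded-support hypothesis is used), so the random denominators may be replaced by their deterministic asymptotics at negligible cost; and (ii) solving the linear recursion for $M_k(t)$ by induction on $t$ for each fixed $k$, showing $M_k(t) = c_k t + O(t^{1-\epsilon})$ or at least $M_k(t) = c_k t (1+o(1))$, then inducting on $k$ to propagate the estimate up the chain. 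The accumulated $o(1)$ error from the multi-hit terms and the self-loop terms must be shown to stay lower order uniformly; the $Y_t < t^{1/3}$ bound (more generally $t^{\alpha}$, $\alpha < 1/2$) is precisely what guarantees $Y_t^2/D_{t-1}^2 = o(1/t)$ and that the martingale increments are small enough for concentration, so I would flag that inequality as the linchpin of the whole argument. The same analysis applied to the observed graph $G(H)$ — where a hyperedge of size $\ell$ contributes to up to $\ell$ pairwise co-occurrences but a vertex's graph-degree is still governed by how many hyperedges touch it — gives the same exponent, which I would remark on at the end.
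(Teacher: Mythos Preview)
Your overall plan is the same as the paper's: set up a master equation for the expected number of degree-$k$ vertices, replace the random normalizer by its concentrated value (the paper uses Hoeffding on $S_t=\sum Y_i$ exactly as you propose using Azuma on $D_t$), and then invoke the Chung--Lu limit lemma (their Lemma~\ref{lem:magic}) to extract $M_k=\lim_t \E[m_{k,t}]/t$ and read off $\beta$ from the ratio $M_k/M_{k-1}$. So the architecture is right.

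The gap is in the coefficient of the loss/gain term, and it is exactly the spot you flagged as ``the one genuinely delicate computation.'' The expected number of preferential-attachment slots per step is \emph{not} $\mu$: on a vertex-arrival step (probability $p$) one of the $Y_t$ positions in the new hyperedge is occupied deterministically by the newborn vertex, and only $Y_t-1$ positions are filled by PA; on an edge-arrival step all $Y_t$ positions are. Hence the expected number of PA-selected slots is $p(\mu-1)+(1-p)\mu=\mu-p$, and the hit rate for a degree-$k$ vertex is $(\mu-p)k/S_{t-1}$, not $\mu k/D_{t-1}$. With the correct coefficient the stationary recursion becomes
\[
c_k\Bigl(1+\tfrac{(\mu-p)k}{\mu}\Bigr)=\tfrac{(\mu-p)(k-1)}{\mu}\,c_{k-1},
\qquad\text{i.e.,}\qquad
\frac{c_k}{c_{k-1}}=\frac{k-1}{\,k+\mu/(\mu-p)\,},
\]
which gives $\beta=1+\mu/(\mu-p)=2+p/(\mu-p)$ on the nose. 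Your version with $\mu$ in place of $\mu-p$ reduces (since $\mu k/D_{t-1}\sim k/t$) to $c_k/c_{k-1}=(k-1)/(k+1)$, i.e.\ $\beta=2$; the $\mu/p$ you wrote in the denominator does not actually arise from your recursion --- normalizing by $N_t\sim pt$ instead of $t$ only rescales each $c_k$ by $1/p$ and cannot change the ratio $c_k/c_{k-1}$. So the discrepancy you noticed is not a boundary-constant issue but this missing ``$-p$'' in the slot count.
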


\begin{proof}
We start with properties of $Y_t$. 
Let $S_t= \sum_1^t Y_t$, so $\E[S_t]= \mu t$ and $S_t < t^{\frac{4}{3}}$.
The deviation of $S_t$ from its expected value can be bounded.

\begin{lem}
$\PRB \left [\card{S_t - \E[S_t]}) \ge t^{\frac{2}{3}}\sqrt{2\log t}\right] ~<~ O(1/t^4)$.
\end{lem}
\begin{proof}
By Hoeffding's inequality \cite{Hoeffding1963Probability}, 
assuming the random variable $Y_i$ satisfies $\PRB[Y_i \in [a_i, b_i]]=1$ 
for some reals $a_i$ and $b_i$,
%\\{\bf DP: Is this the intention? Or is there a concrete definition for $a_i$ and $b_i$?}
%Chen: see bellow
% state that in our case:
$$\PRB \left[\card{S_t - \E[S_t]}) \ge x \right] ~\le~
2 \exp \left(-\frac{2x^2}{\sum_{i=1}^t (b_i-a_i)^2} \right)~.$$
Taking $x=t^{\frac{2}{3}}\sqrt{2\log t}$ and noting that 
$(b_i-a_i)^2 < t^\frac{2}{3}$ and $\sum_{i=1}^t (b_i-a_i)^2 < t^\frac{4}{3}$
%Taking $x=\sqrt{2t\log t}$ and noting that $\sum_{i=1}^t (b_i-a_i)^2 < t$
%\\ {\bf DP: why? Is it assumed that $b_i-a_i \le 1$? Why is that?} \\
yields the result.
\end{proof}

To bound the degree distribution of a non-uniform random hypergraph
we closely follow Chung and Lu's analysis on preferential attachment graphs
\cite{chung2006complex}.
Let $m_{k,t}$ denote the number of vertices of degree $k$ at time $t$. Note that
$m_{1,0} =0$ and $m_{0,t} = 0$.
%$$m_{1,0} =0 ~~\text{ and }~~ m_{0,t} = 0~.$$
We derive the recurrence formula for the expected value $\E[m_{k,t}]$.
The main observation here is that a vertex has degree $k$ at time $t$ if either
it had degree $k$ at time $t-1$ and was not selected into a hyperedge at time $t$,
or it had degree $k-1$ at time $t-1$ and was selected into a hyperedge at time $t$.
Letting $\mathcal{F}_t$ be the $\sigma$-algebra associated with
the probability space at time $t$, we have for any $t>0$ and $k>0$:
\begin{eqnarray*}
\E[m_{k,t} \vert \mathcal{F}_{t-1}] &=&
m_{k,t-1} \left(p\E_{Y_t}\left[\left(1-\frac{k}{S_{t-1}}\right)^{Y_t-1}\right]
 + (1-p)\E_{Y_t}\left[\left(1-\frac{k}{S_{t-1}}\right)^{Y_t}\right] \right)
\\
&& + m_{k-1,t-1}\left(p\E_{Y_t}\left[\left(1-(1-\frac{k-1}{S_t}\right)^{Y_t-1})\right]\right.
\\
&& + \left. (1-p)\E_{Y_t}\left[\left(1-\left(1-\frac{k-1}{S_t}\right)^{Y_t}\right)\right] \right),
\end{eqnarray*}
hence
\begin{eqnarray*}
\E[m_{k,t} \vert \mathcal{F}_{t-1}] &=&
m_{k,t-1} \left(p-\E_{Y_t}\left[\frac{(Y_t-1)kp}{S_t}\right]
 - O\left(\E_{Y_t}\left[\left(\frac{Y_tkp}{S_t}\right)^2\right]\right) \right.
\\
&& ~~~~~~~~~~~~ \left. + 1-p-\E_{Y_t}\left[\frac{(1-p)Y_tk}{S_t}\right]
 - O\left(\E_{Y_t}\left[\left(\frac{(1-p)Y_tk}{S_t}\right)^2\right]\right)\right)
\\
&& + m_{k-1,t-1} \left(\E_{Y_t}\left[\frac{(Y_t-1)p(k-1)}{S_t}\right]
- O\left(\E_{Y_t}\left[\left(\frac{Y_tpk}{S_t}\right)^2\right]\right) \right.
\\
&& ~~~~~~~~~~~~~~~~ \left. + \E_{Y_t}\left[\frac{(1-p)Y_t(k-1)}{S_t}\right]
 - O\left(\E_{Y_t}\left[\left(\frac{(1-p)k}{S_t}\right)^2\right]\right)\right)
\end{eqnarray*}
or
$$\E[m_{k,t} \vert \mathcal{F}_{t-1}] ~=~
m_{k,t-1} \left(1-\frac{(\mu-p)k}{S_t}+ O\left(\frac{k^2}{S_t^2}\right) \right)
+ m_{k-1,t-1} \left(\frac{(\mu-p)(k-1)}{S_t} 
+ O\left(\frac{(k-1)^2}{S_t^2}\right)\right)~.$$
%\begin{eqnarray*}
%\E[m_{k,t} \vert \mathcal{F}_{t-1}] &=&
%m_{k,t-1} \left(1-\frac{(\mu-p)k}{S_t}+ O\left(\frac{k^2}{S_t^2}\right) \right)
%\\
%&& + m_{k-1,t-1} \left(\frac{(\mu-p)(k-1)}{S_t} +O\left(\frac{(k-1)^2}{S_t^2}\right)\right)~.
%\end{eqnarray*}
Using the bound on $S_t$ we can find the expectation $\E[m_{k,t}]$.
\begin{eqnarray*}
\E[m_{k,t}] &=& (1-1/t^4)\left(\E[m_{k,t-1}]
\left(1-\frac{(\mu-p)k}{\mu t \pm t^{\frac{2}{3}}\sqrt{2\log t}}
+ O\left(\frac{k^2}{t^2}\right)\right)\right. \\
&& \;\;\;\; \left. + \E[m_{k-1,t-1}] \left(\frac{(\mu-p)(k-1)}{\mu t \pm t^{\frac{2}{3}}\sqrt{2\log t}} + O\left(\frac{k^2}{t^2}\right) \right) \right)  + \frac{1}{t^4} \cdot t^{4/3} \\
&=& \E[m_{k,t-1}] \left(1-\frac{(\mu-p)k}{\mu t \pm t^{\frac{2}{3}}\sqrt{2\log t}}
+ O\left(\frac{k^2}{t^2}\right) \right) \\
&& \;\;\;\; +  \E[m_{k-1,t-1}] \left(\frac{(\mu-p)(k-1)}{\mu t \pm t^{\frac{2}{3}}\sqrt{2\log t}} + O\left(\frac{k^2}{t^2}\right) \right) + O(1/t^2)~.
\end{eqnarray*}
For $t>0$ and the special case of $k=1$ we have
$$\E[m_{1,t} \vert \mathcal{F}_{t}] ~=~
m_{1,t-1}\left(1-\frac{(\mu-p)k}{S_t}+ O(\frac{k^2}{S_t^2}) \right) + p~,$$
thus
$$\E[m_{1,t}] ~=~ \E[m_{1,t-1}] \left(1-\frac{(\mu-p)k}{\mu t \pm t^{\frac{2}{3}}\sqrt{2\log t}} + O(\frac{k^2}{t^2}) \right) + p + O(1/t^2)~.$$
We use the following lemma of \cite{chung2006complex}.
\begin{lem}~\cite{chung2006complex}
\label{lem:magic}
Let $(a_t),(b_t),(c_t)$ be three sequences such that
$a_{t+1} = \left(1-\frac{b_t}{t}\right)a_t + c_t$, 
$\lim_{t\rightarrow\infty}b_t=b>0$ and 
$\lim_{t\rightarrow\infty}c_t = c$. Then
$\lim_{t\rightarrow \infty} (a_t/t)$ exists and equals $c/(1+b)$.
%$\lim\limits_{t\rightarrow \infty} (a_t/t) ~=~ c/(1+b)$.
%$$\lim\limits_{t\rightarrow \infty}{\frac{a_t}{t} ~=~ \frac{c}{1+b}}~.$$
\end{lem}

We show by induction that $\lim_{t\rightarrow \infty} \E[m_{k,t}]/t$ exists and has a limit $M_k$ for each $k$.
For $k=1$, apply Lemma \ref{lem:magic} with
$$b_t = \frac{\mu-p}{\mu \pm t^{\frac{2}{3}}\sqrt{2\log t}/t} + O(k^2/t) 
~~\text{ and }~~
c_t = p + O(1/t^2)$$
and hence
$$\lim_{t\rightarrow\infty}b_t = \frac{\mu-p}{\mu} ~~\text{ and }~~
\lim_{t\rightarrow\infty} c_t = p,$$
to get
$$M_1 ~=~ \lim\limits_{t\rightarrow \infty} \frac{\E[m_{1,t}]}{t}
~=~ \frac{\mu p}{2\mu-p}~.$$
We now assume that $lim_{t\rightarrow\infty} \E[m_{k-1,t}]/t$ exists
and apply Lemma \ref{lem:magic} again with
$$b_t ~=~ \frac{(\mu-p)k}{\mu \pm t^{\frac{2}{3}}\sqrt{2\log t}/t} + O\left(\frac{k^2}{t}\right)$$
and
$$c_t ~=~  \frac{\E[m_{k-1,t-1}]}{t} \left(\frac{(\mu-p)(k-1)}{\mu \pm t^{\frac{2}{3}}\sqrt{2\log t}/t} + O\left(\frac{k^2}{t}\right) \right) 
+ O\left(\frac{1}{t^2}\right)~.$$
Then
$$\lim_{t\rightarrow\infty}b_t = b= \frac{(\mu-p)k}{\mu}
~~\text{ and }~~
\lim_{t\rightarrow\infty}c_t = c = M_{k-1}  (\mu-p)(k-1)/ \mu~,$$
and by Lemma \ref{lem:magic} we get that 
$lim_{t\rightarrow\infty}\E[m_{k,t}]/t$ exists and satisfies
\begin{align}
\label{Eq:recNon}
M_k ~=~ M_{k-1} \frac{(\mu-p)(k-1)}{\mu(1 + k(\mu-p)/\mu)}
~=~ M_{k-1} \frac{(k-1)}{k+ \frac{\mu}{\mu-p}}~.
\end{align}
Recall that a power law distribution has the property that
$M_k\propto k^{-\beta}$ for large $k$.

Now if $M_k\propto k^{-\beta}$, then
$$\frac{M_k}{M_{k-1}} ~=~ \frac{k^{-\beta}}{(k-1)^{-\beta}}
~=~ \left(1-\frac{1}{k}\right)^\beta
~=~ 1-\frac{\beta}{k}+O\left(\frac{1}{k^2}\right)~.$$
By Eq. (\ref{Eq:recNon}),
$$\frac{M_k}{M_{k-1}} ~=~ \frac{k-1}{k + \frac{\mu}{\mu-p}}
~=~ 1- \frac{1+\frac{\mu}{\mu-p}}{k+\frac{\mu}{\mu-p}}
~=~ 1-\frac{1+\frac{\mu}{\mu-p}}{k} + O\left(\frac{1}{k^2}\right)~,$$
so the exponent $\beta$ of the power law satisfies
$$\beta = 1 +\frac{\mu}{\mu-p} = 2 + \frac{p}{\mu-p}~.$$
\end{proof}

A special case of $H(H_0, p, Y_t)$ is when $Y_t$ is the constant function $d$ 
and the hypergraph becomes a $d$-uniform hypergraph denoted as $H(H_0, p, d)$.

\begin{cor}
The degree distribution of a $d$-uniform hypergraph $H(H_0, p, d)$
follows a power law with $\beta=2+p/(d-p)$.
\end{cor}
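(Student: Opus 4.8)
The plan is to observe that the corollary is an immediate specialization of the theorem, obtained by setting $Y_t \equiv d$ for a fixed integer $d \ge 3$, and then verifying that all the hypotheses required by the theorem are met by this constant sequence. First I would check that the constant random variables $Y_t = d$ satisfy the standing assumptions: they are independent (trivially, being deterministic), they have constant expectation $\E[Y_t] = d =: \mu$, and they have bounded support with $2 < d < t^{1/3}$ for all sufficiently large $t$ (which is all that is needed, since the convergence statement is asymptotic). Thus the theorem applies verbatim with $\mu = d$.

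Next I would simply substitute $\mu = d$ into the conclusion of the theorem. The theorem gives a power-law degree distribution with exponent $\beta = 2 + p/(\mu - p)$; setting $\mu = d$ yields $\beta = 2 + p/(d-p)$, which is exactly the claimed value. For completeness I might also note that the recurrence (\ref{Eq:recNon}) collapses in this case to $M_k = M_{k-1}(k-1)/(k + d/(d-p))$, and that the Hoeffding bound in the preliminary lemma is in fact not even needed here, since $S_t = dt$ is deterministic and the error terms involving $t^{2/3}\sqrt{2\log t}$ vanish; but invoking the theorem as a black box is the cleanest route.

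There is essentially no obstacle: the only point requiring a word of care is that the theorem's hypothesis nominally demands $Y_t > 2$ (strict), so one must take $d \ge 3$; the case $d = 2$ corresponds to the observed graph being (essentially) the ordinary random preferential attachment graph of Chung and Lu, and is covered by their original result with $\beta^G = 2 + p/(2-p)$, consistent with the formula. So the entire proof is: \emph{Apply the Theorem with the constant sequence $Y_t \equiv d$, noting $\mu = \E[Y_t] = d$; the stated exponent follows by substitution.}
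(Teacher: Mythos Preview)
Your proposal is correct and matches the paper's approach exactly: the paper presents the corollary as an immediate special case of the main theorem, obtained by taking $Y_t$ to be the constant $d$ so that $\mu=d$, with no additional argument given. Your extra remarks about the deterministic $S_t=dt$ and the $d\ge 3$ caveat are fine observations but go beyond what the paper itself supplies.
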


Figure \ref{fig:exponent} illustrates the difference in exponents $\beta$ 
between preferential attachment graphs (i.e., 2-uniform hypergraphs) and 
3-uniform hypergraphs as a function of $p$.

In many cases one can only observe the graph $G[H]$ that results of the 
underlying hypergrph $H$. That is, the set of vertices of $G(H)$ is identical 
to the set of vertices of $H$ and for every hyperedge $e \in H$ 
we create edges in $G(H)$ to form a clique between all the vertices in $e$. 
%We denote this graph as $G(H)$. 
Now we can prove the following.

 \begin{clm}
The degree distribution of the observed graph $G(H(H_0, p, d))$ that results 
from a $d$-uniform hypergraph follows a power law with $\beta=2+p/(d-p)$.
\end{clm}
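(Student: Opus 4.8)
The plan is to reduce the claim to the Corollary by observing that, for a $d$-uniform hypergraph, passing to the observed graph simply rescales all degrees by the fixed constant $d-1$. When $H$ is $d$-uniform, each hyperedge of $H$ becomes a $d$-clique in $G(H)$, so every hyperedge incident to a vertex $v$ contributes $d-1$ edges incident to $v$; hence $d_{G(H)}(v)=(d-1)\,d_H(v)$ — exactly if $G(H)$ is taken as a multigraph, and otherwise up to the edges double-counted when two hyperedges of $v$ share a further vertex. First I would record that this over-count is negligible: the probability that a fixed-degree vertex is affected by such a coincidence is governed by collisions among its $O(d_H(v))$ preferential-attachment ``neighbour slots'', each pair of which coincides with probability $O(\max_w d_H(w)/S_t)$, and summing over vertices (splitting off the $O(t^{2/3})$ vertices whose first hyperedge arrives before time $t^{2/3}$) one gets that only $o(t)$ of the degree-$k$ vertices are affected, using $S_t=\Theta(t)$ from the proof of the Theorem and the standard bound $\max_w d_H(w)=\tilde O(\sqrt t)$.

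Granting this, let $m'_{j,t}$ count the vertices of degree $j$ in $G(H_t)$. Then $m'_{j,t}=0$ unless $(d-1)\mid j$, and $m'_{(d-1)k,t}=m_{k,t}$ up to an $o(t)$ error (an exact identity of random variables in the multigraph reading), where $m_{k,t}$ is precisely the hypergraph degree count analysed in the Theorem with $\mu=d$. Hence $M'_{(d-1)k}:=\lim_{t\to\infty}\E[m'_{(d-1)k,t}]/t=M_k$, and by the Corollary $M_k\propto k^{-\beta}$ for large $k$ with $\beta=2+p/(d-p)$. Since $k^{-\beta}=(d-1)^{\beta}\bigl((d-1)k\bigr)^{-\beta}$, this gives $M'_j\propto j^{-\beta}$ as $j\to\infty$ along the support $\{(d-1)k:k\ge1\}$ of the observed degree distribution, i.e.\ the asserted power law with exponent $2+p/(d-p)$.

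To make the argument self-contained I would also give a direct re-derivation in the style of the proof of the Theorem. The only changes are that a selected vertex now gains $d-1$ instead of $1$, and that a vertex of observed degree $j$ (hyperdegree $j/(d-1)$) joins the new hyperedge with probability $p\bigl(1-(1-\tfrac{j/(d-1)}{S_{t-1}})^{d-1}\bigr)+(1-p)\bigl(1-(1-\tfrac{j/(d-1)}{S_{t-1}})^{d}\bigr)=\tfrac{(d-p)j}{(d-1)\,d\,t}(1+o(1))$, using $S_{t-1}=dt\pm t^{2/3}\sqrt{2\log t}$ and the concentration bound on $S_t$. Plugging the corresponding $b_t\to\tfrac{(d-p)j}{(d-1)d}$ and $c_t$ into Lemma~\ref{lem:magic} yields
\begin{align*}
M'_j ~=~ M'_{j-(d-1)}\cdot\frac{(d-p)\bigl(j-(d-1)\bigr)}{(d-1)d+(d-p)j}\qquad(j>d-1),
\end{align*}
and matching with $M'_j/M'_{j-(d-1)}=(1-\tfrac{d-1}{j})^{\beta}=1-\tfrac{(d-1)\beta}{j}+O(1/j^2)$ forces $(d-1)\beta=(d-1)(2d-p)/(d-p)$, i.e.\ $\beta=(2d-p)/(d-p)=2+p/(d-p)$.

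The proof is short; the only point needing care (the ``main obstacle'') is the bookkeeping around the overlap correction together with the observation that the observed degrees live on the sublattice $(d-1)\mathbb{Z}_{>0}$, so the power law is read there, consistently with the paper's convention $M_k\propto k^{-\beta}$. The conceptual content of the claim is the robustness statement it encodes: clique-expanding a $d$-uniform hypergraph scales every degree by the same constant $d-1$ and therefore leaves the power-law exponent intact, so the observed graph exhibits the hypergraph exponent $2+p/(d-p)$ rather than the exponent a genuine graph-evolution process would produce.
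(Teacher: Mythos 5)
Your argument is correct, and in fact it supplies something the paper itself omits: despite the sentence ``Now we can prove the following,'' no proof of this claim appears in the paper, so there is no official argument to compare against. Your reduction is surely the intended one --- in a $d$-uniform hypergraph every hyperedge containing $v$ contributes exactly $d-1$ to $v$'s observed degree, so $d_{G(H)}(v)=(d-1)\,d_H(v)$ in the multigraph reading, and rescaling all degrees by a fixed constant leaves the power-law exponent unchanged since $k^{-\beta}\propto((d-1)k)^{-\beta}$. The direct re-derivation via Lemma~\ref{lem:magic} also checks out: the selection probability $\frac{(d-p)j}{(d-1)dt}(1+o(1))$ is right, the resulting recurrence for $M'_j$ is correct, and the matching step forces $\beta=2+p/(d-p)$.

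One small imprecision worth fixing in the overlap-correction step: the bound $\max_w d_H(w)=\tilde{O}(\sqrt{t})$ is specific to the classical case; for this model the maximum degree scales like $t^{1/(\beta-1)}=t^{(d-p)/d}$, which exceeds $\sqrt{t}$ when $p<d/2\cdot\frac{1}{1}$ is small. This does not damage your argument, because the collision probability per pair of neighbour slots only needs $\max_w d_H(w)/S_t\to 0$, i.e.\ $\max_w d_H(w)=o(t)$, which holds for every $p>0$; and since the claim concerns vertices of fixed observed degree $j$ (hence bounded hyperdegree $j/(d-1)$), the number of slot pairs per relevant vertex is $O(1)$ and only $o(t)$ of the degree-$j$ vertices are affected by shared neighbours. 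I would simply replace the $\sqrt{t}$ assertion by the weaker $o(t)$ bound. With that adjustment the proof is complete and, arguably, more careful than the paper, which is silent both on the simple-graph versus multigraph distinction and on the fact that the observed degrees live on the sublattice $(d-1)\mathbb{Z}_{>0}$.
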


%%%%%%%%%%%%%%%%%%
\begin{figure}[htb]
\centering
\includegraphics[width=.7\columnwidth]{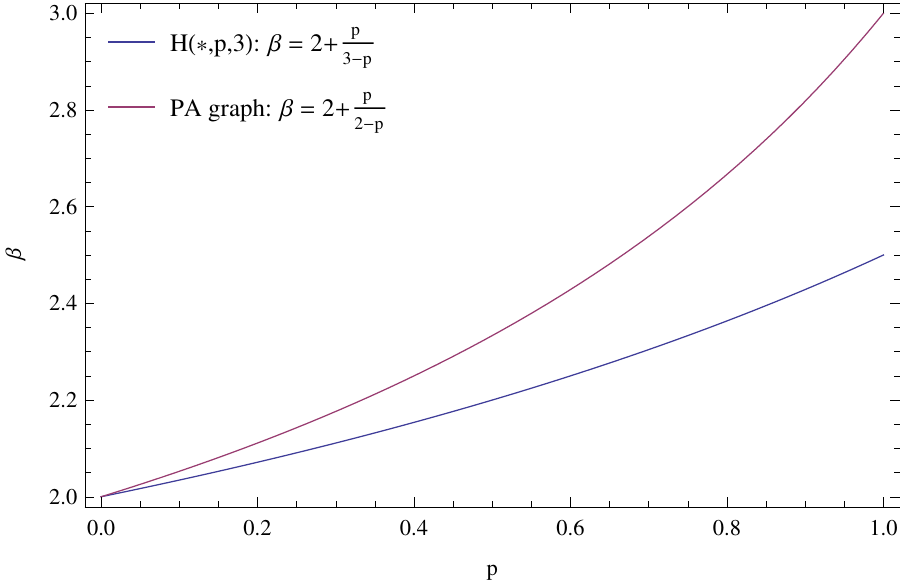}
\caption{The exponent $\beta$ of a preferential attachment graph and 
a 3-uniform hypergraph as a function of $p$ 
(the probability of an edge arrival event). In graphs it is between 2 and 3, 
whereas in 3-hypergraphs it is between 2 and 2.5.
%and $\beta = 2 + \frac{p}{3-p}$.
}
\label{fig:exponent}
\end{figure}
%%%%%%%%%%%%%%%%%

Note that the expected degree of vertices in $G(H)$ in this case is $d(d-1)/2$. 
Interestingly, if we generate a new graph $G'$ with expected degree $d(d-1)/2$ 
according to the classical graph preferential attachment model, then its 
degree distribution will be $\beta'=2+p/(2-p)$. Hence the \emph{observed} 
degree distribution of $G(H)$ and $G'$, $\beta$ and $\beta'$ respectively, 
will be different. On the other hand, it we generate $G'$ 
(using the classical preferential attachment model) so that it agrees with 
the degree distribution of $G$, then the average degree will be different.  
This observation is supported by simulation results depicted in Figure 
\ref{fig:powerlaw}.
%%%%%%%%%%%%%%%%%%%%%
\begin{figure}[htb]
\centering
\includegraphics[width=.7\columnwidth]{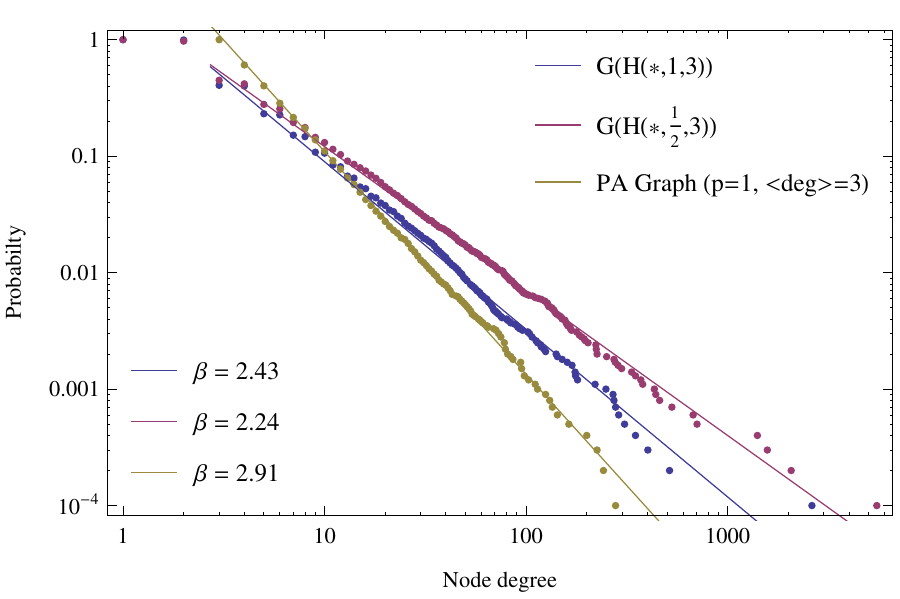}
\caption{Example of the cumulative degree distribution of three networks 
with $n=10,000$: 
(1) A graph $G(H(*,1,3))$ derived from a 3-uniform hypergraph $H(*,1,3)$, 
(2) A graph $G(H(*,\frac{1}{2},3))$ derived from a 3-uniform hypergraph 
$H(*,\frac{1}{2},3)$, and
(3) A preferential attachment graph with average degree $d(d-1)/2=3$. 
Graphs derived from hypergraphs have lower exponent, also as a function of $p$.}
\label{fig:powerlaw}
\end{figure}
%%%%%%%%%%%%%%%%%%%%%%

This discussion seems to indicate that, in some sense, 
``the blanket (i.e., of the model) is too short" 
and one should be careful in deciding what is the right model that captures 
the observed degree distribution, and in particular, if the generative model 
is of a hypergraph or the classical graph model.

%%%%%%%%%%%%%%%%%%%%
\section{Example}
\label{sec:dblp}
%%%%%%%%%%%%%%%%%%%%

To test the above observations empirically, we studied a 
\emph{coauthorship hypergraph} of researchers in computer science, 
extracted from DBLP \cite{ley2009dblp}, a dataset recording most of the 
publications in computer science.
This hypergraph consists of hundreds of thousands of vertices 
(representing authors) and hyperedges (representing papers).
Figure \ref{fig:dblp-edges-power} shows the degree distribution 
of hyperedge sizes in DBLP for hyperedges sizes at least 3. 
The hyperedge size distribution closely fits a power law degree distribution 
with exponent $\beta=4.66$.  This means that the hyperedge size is
(with high probabilty) smaller than $m^{1/3}$, 
where $m$ is the number of papers (hyperedges). For the example of DBLP, 
where the number of papers is $m=2420879$, the number of authors on a paper 
(i.e., the hyper-edge size) will be with high probability below $134$.
%\\ {\bf DP: (a) There are typos in the caption of Fig. \ref{fig:dblp-edges-power}. (b) What are the axes? (c) How comes $\beta$ here is negative?}
% Chen: Add a short explanation, corrected \beta.

%%%%%%%%%%%%%%%%%
\begin{figure}[h]
\centering
\includegraphics[width=.7\columnwidth]{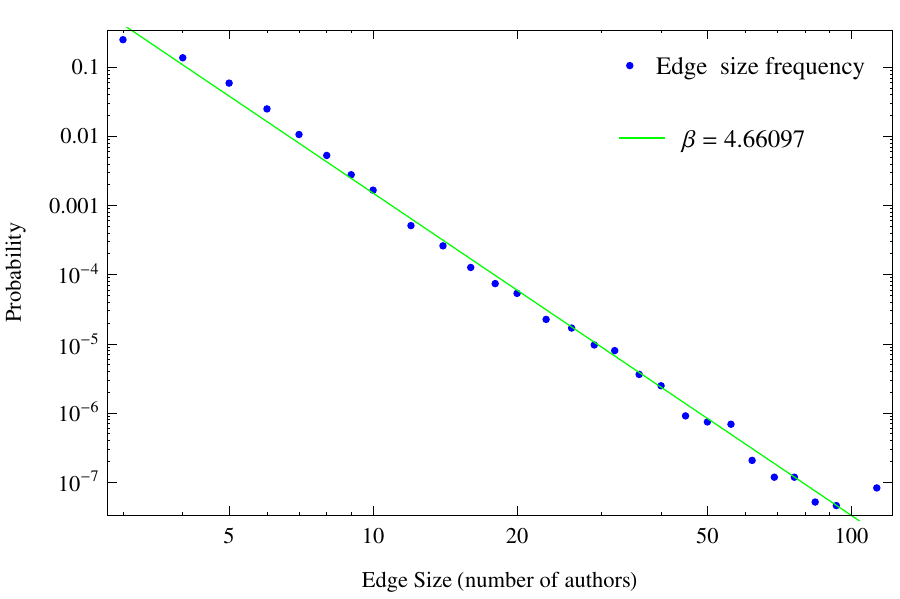}
\caption{The degree distribution of hyperedge sizes in DBLP 
for hyperedge sizes at least 3. 
The distribution closely fits a power law degree distribution 
with exponent $\beta=4.66$.}
\label{fig:dblp-edges-power}
\end{figure}
%%%%%%%%%%%%%%%%%

\section*{Acknowledgements}
The authors thank Eli Upfal for helpful discussions about the core ideas of the paper.

\clearpage

\bibliographystyle{acm}
\bibliography{hyperpa}

%%%%%%%%%%%%%%
\end{document}